\newcommand{\eps}{\varepsilon}
\newcommand{\opt}{\mbox{\sc opt}}
\newcommand{\C}{\mathcal{C}}
\newcommand{\E}{\mathcal{E}}
\newcommand{\D}{{\cal D}}
\newcommand{\DD}{\D}
\newcommand{\pr}[1]{\mbox{\textbf{Pr}}\left[#1\right]}
\newcommand{\expect}[1]{\mbox{\textbf{E}}\left[#1\right]}
\newtheorem{theorem}{Theorem}
\newtheorem{lemma}[theorem]{Lemma}
\newtheorem{corollary}[theorem]{Corollary}
\newtheorem{claim}[theorem]{Claim}
\newcommand{\tuple}[1]{\left(#1\right)}
\newcommand{\disk}[1]{\mbox{\textbf{disk}}\tuple{#1}}
\newcommand{\pow}{\mbox{\textbf{wvd}}}
\newcommand{\cell}{\mbox{\textbf{cell}}}
\newcommand{\wt}[1]{\mbox{\textbf{wt}}\tuple{#1}}
\title{Approximation Algorithms for Dominating Set in 
Disk Graphs}
\author{Matt Gibson\thanks{Dept. of Computer Science, University of
Iowa, Iowa City, IA 52242, U.S.A. {\tt mrgibson@cs.uiowa.edu}.
Work was supported by NSF grant CCF 0915543.} 
\and Imran A. Pirwani\thanks{Dept. of Computing Science, University of
Alberta, Edmonton, AB T6G-2E8, Canada. {\tt pirwani@cs.ualberta.ca}.
Work was supported by Alberta Ingenuity.}}
\begin{document}

\maketitle

\bibliographystyle{plain}

\begin{abstract}
We consider the problem of finding a lowest cost dominating set in a
given disk graph containing $n$ disks. The problem has been extensively
studied on subclasses of disk graphs, yet the best known approximation for
disk graphs has remained $O(\log n)$ --
a bound that is asymptotically no better than the general case.
We improve the status quo in two ways: for the unweighted case, we show how
to obtain a PTAS using the framework recently proposed (independently)
by Mustafa and Ray \cite{MustafaR09} and by Chan and Har-Peled 
\cite{ChanH09}; for
the weighted case where each input disk has an associated rational
weight with the objective of finding a minimum cost dominating set, we give
a randomized algorithm that obtains a dominating set whose weight is
within a factor $2^{O(\log^* n)}$ of a minimum cost solution, with high
probability -- the technique follows the framework proposed recently by
Varadarajan \cite{Varadarajan10}. 
\end{abstract}

\section{Introduction}
For a set $\mathcal{D}$ of $n$ disks in the Euclidean plane, define
an intersection graph, $G=(V,E)$, thus: $V=\mathcal{D}$; $\{u,v\} \in
E \Leftrightarrow \disk{u} \cap \disk{v} \neq \emptyset$. $G$ is called
a {\em disk graph}; it is a {\em unit disk graph} when the disk radii are
identical.

Given a graph the {\em minimum dominating set} (MDS) problem is to find a 
smallest subset
$\mathcal{D'} \subseteq V$ such that every vertex is either in
$\mathcal{D'}$ or is adjacent to a vertex in $\mathcal{D'}$. On general
graphs, the problem is $(1-\eps) \ln n$ hard to approximate for any 
$\eps > 0$ under standard complexity theoretic assumptions
\cite{Feige98,ChlebikC04}, 
while a greedy algorithm yields an $O(\log n)$ approximation 
\cite{Vazirani01}.

Nevertheless, better approximations are possible for restricted domains.
For example, the problem admits a {\em polynomial-time approximation
scheme} (PTAS) for unit disk graphs and {\em growth-bounded graphs} 
\cite{HuntMRRRS98,NiebergHK08}. The problem is NP-hard on these domains
\cite{ClarkCJ90}.  However, for the disk graph case,
$o(\log n)$ approximations have remained elusive -- perhaps,
in part, because known techniques for unit disk graphs and 
solutions to other problems on disk graphs have either relied on packing 
properties \cite{HuntMRRRS98,NiebergHK08,ErlebachJS05,Chan03}, or
when packing property does not hold, as in the {\em minimum weighted
dominating set} on unit disk graphs, the fact that disk radii are
uniform \cite{AmbuhlEMN06,PanditPV09}. Erlebach and van Leeuwen recently
studied the dominating set problem on {\em fat objects}, e.g., disk
graphs, \cite{ErlebachL08}.  They note that existing techniques for disk 
graphs do not seem sufficient to solve MDS \cite{ErlebachL08}; they also
give an $O(1)$-approximation for fat objects of {\em bounded ply}.

In their recent break-through papers, Chan and Har-Peled
\cite{ChanH09}, and Mustafa and Ray \cite{MustafaR09} independently
showed how a simple {\em local search} algorithm on certain geometric
graphs yields a PTAS for some problems; Chan and Har-Peled
\cite{ChanH09} show local search yields a PTAS for maximum independent
set problem on {admissible} objects, while Mustafa and Ray
\cite{MustafaR09} show local search yields a PTAS for the minimum
hitting set problem given a collection of points and half-spaces in 
$\mathbb{R}^3$, and also for points and admissible regions in 
$\mathbb{R}^2$. They both use the {\em planar separator theorem}
to relate the cost of the local search solution with the optimum
solution. In the framework, at the crux lies the analysis of
a certain graph whose vertices are objects found by 
local search and ones that belong to an optimum solution, and whose edges 
(which are only between the two kinds of vertices) satisfy a
property relating the two solutions. They show that there exists
such a graph which is also planar. 
Mustafa and Ray \cite{MustafaR09} refer to the existence of
such a planar graph as the {\em locality condition}.

\paragraph{Results:}
Our first result is a PTAS for the minimum dominating set problem
for disk graphs via a local search algorithm, 
as in \cite{ChanH09,MustafaR09}.
Our analysis also uses the framework introduced
by these two papers. Our main new contribution is to show
the existence of a planar graph satisfying the locality condition. 
This graph turns out to be the dual of a weighted Voronoi diagram
in the plane. 

The minimum dominating set problem for disk graphs can be reduced 
to the problem of hitting half-spaces in $\mathbb{R}^4$ with the smallest
number of a given set of points. That is, given the set $\D$ of disks
that form the input to the MDS problem, we can easily compute a map
$\pi$ from $\D$ to a set of points in $\mathbb{R}^4$, and a map $h$ from
$\D$ to a set of half-spaces in $\mathbb{R}^4$, with the following property:
Two disks $d_1$ and $d_2$ from $\D$ intersect if and only if $\pi(d_1)$
lies in $h(d_2)$. Thus we can efficiently reduce the MDS problem
for disks to a hitting set problem for points and half-spaces in
$\mathbb{R}^4$. While there is a PTAS for the hitting set problem in 
$\mathbb{R}^3$, as shown by \cite{MustafaR09}, there is none known for
$\mathbb{R}^4$. It is not hard to see that a local search such as the
one in \cite{MustafaR09} does not yield a PTAS in $\mathbb{R}^4$.

Rather than reduce to a hitting set problem, we are able to establish
the locality condition by staying in the plane itself. In fact, the
graph for the locality condition is the dual of the weighted Voronoi diagram 
of the centers of the disks in the local search solution and the optimal
solution, where the weights are the radii of the disk. This can be
seen as generalizing the situation considered by \cite{MustafaR09} for
the hitting set problem with points and disks in the plane. In that
case, the graph for the locality condition is the Delaunay triangulation,
which is the dual of the unweighted Voronoi diagram. 

For the case when the disks are weighted, we give the first
$o(\log n)$ approximation algorithm; we give a $2^{O(\log^* n)}$
approximation algorithm\footnote{$\log^*n$ is the fewest number of
iterated ``logarithms" applied to $n$ to yield a constant.}. This result
is based on the framework recently introduced by Varadarajan for
the weighted geometric set cover problem \cite{Varadarajan10}. 
Our contribution here is to observe that the framework is applicable to our 
dominating set problem as well; the weighted Voronoi diagram is
the key to this result also. 

We assume that the inputs for both
problems satisfy non-degeneracy assumptions -- no three disk
centers on a line and no four disks tangent to a circle. This is
without loss of generality, as these conditions can be enforced by
simple perturbations. In Section \ref{sec:dom}, we present our 
PTAS for the unweighted dominating set problem, and in Section
\ref{sec:weighted} our algorithm for weighted dominating set.
 
\section{The Unweighted Case: PTAS via Local Search}
\label{sec:dom}

In this section, we give our PTAS for minimum dominating set for disk graphs.  Here, we are given a disk graph with a set $\DD$ of $n$ disks in the Euclidean plane, and we are interested in computing a minimum cardinality dominating set of the disk graph.  The algorithm is given in Section \ref{sec:algo} and the analysis of the approximation ratio is given in Section \ref{sec:approx}.

\subsection{The Algorithm}
\label{sec:algo}

\paragraph{Local Search.}
Call a subset of disks, $B \subseteq \DD$, $b$-locally optimal if one cannot obtain a
smaller dominating set by removing a subset $X \subseteq B$ of size at 
most $b$ from $B$ and replacing that with a subset of size at most $|X|-1$
from $\mathcal{D} \setminus B$.  Our algorithm will compute a $b$-locally optimal set of disks for $b = \frac{c}{\epsilon^2}$ where $c > 0$ is a large enough constant.  Our algorithm begins with an
arbitrary feasible set of disks 
and proceeds by making small local
exchanges of size $b = O(\frac{1}{\epsilon^2})$, for a given $\epsilon > 0$. 
We stop when no further local improvements are possible. 

Suppose that the solution returned
is $B$. Finally, for reasons apparent in the analysis, we check to see 
if for any disk $u \in B$ there is a disk 
$v \in \mathcal{D}$ such that $u$ is completely contained in $v \in \DD \setminus B$.  If such a disk exists, then simply
replace $u$ with the largest such disk $v$.  We return this as our final 
solution and call it $B$. Our replacement step ensures that there is no
disk in $B$ that is properly contained in some other disk in $\D$.

\paragraph{Running Time.}
We will now show that the running time can be bounded by a polynomial in $n$.  The number of swaps that the local search algorithm will make is at most $n$, because there are $n$ disks and each swap strictly decreases the number of disks in the solution.  For each swap, we need to check every subset of disks of size at most $b$ which can be done in time $O(n^b)$.  Recall that $b$ is only a function of $\epsilon$, and thus we can have the exponential dependence on $b$ in the running time.

So we will make at most $n$ swaps, each of which takes time $O(n^b)$.  Clearly, the last step (where we check to see if a disk is contained within another) can be done in time polynomial in $n$, and therefore the entire running time of the algorithm is efficient with respect to $n$.  

\subsection{Approximation Ratio}
\label{sec:approx}
We will show that our algorithm is a PTAS, thus proving the following theorem:

\begin{theorem}
 \label{thm:main}
For any $\epsilon > 0$, there exists a polynomial time algorithm for the minimum dominating set problem on disk graphs that returns a solution whose cost is at most $(1 + \epsilon)OPT$ where $OPT$ is the cost of an optimal solution.
\end{theorem}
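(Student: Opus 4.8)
The plan is to run the local-search analysis template of Chan--Har-Peled \cite{ChanH09} and Mustafa--Ray \cite{MustafaR09}. Let $B$ be the set of disks returned by the algorithm and let $R$ be a minimum dominating set; the running time has already been bounded by a polynomial, so it suffices to prove $|B| \le (1+\epsilon)|R|$. By the reasoning behind the final replacement step, no disk of $B$ is properly contained in another disk of $\D$; replacing every disk of $R$ that is properly contained in a larger disk of $\D$ by that larger disk preserves feasibility and never increases $|R|$, so we may assume the same normalization for $R$, and (since disks common to both solutions contribute equally to the two sides) we may also assume $B \cap R = \emptyset$. Following \cite{ChanH09,MustafaR09} it then suffices to exhibit a graph $H$ on vertex set $B \cup R$ that is (i) planar and (ii) satisfies the \emph{locality condition}: for every disk $e \in \D$ there is an edge of $H$ joining some $b \in B$ that dominates $e$ to some $\rho \in R$ that dominates $e$, where ``$d$ dominates $e$'' means $d = e$ or $\disk d \cap \disk e \neq \emptyset$.

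For $H$ I would take the dual of the weighted Voronoi diagram of $B \cup R$: disk $d$ contributes the site $p_d$ with additive weight $r_d$, the weighted distance being $\delta(x,d) = \|x - p_d\| - r_d$, and two disks are joined in $H$ exactly when their Voronoi cells share a boundary arc. Planarity rests on two observations. First, under the no-containment normalization each site lies in its own cell: $p_d \in \cell(d)$ is equivalent to $r_{d'} \le r_d + \|p_d - p_{d'}\|$ for all $d'$, i.e.\ to $\disk d \not\subsetneq \disk{d'}$. Second, each half-cell $\{x : \delta(x,d) \le \delta(x,d')\}$ contains $p_d$ and is star-shaped about it, because $\delta(\cdot,d) - \delta(\cdot,d')$ is non-decreasing along every ray out of $p_d$; hence every cell is a non-empty star-shaped, and in particular connected, region. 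Thus the diagram is a genuine planar subdivision and its dual $H$ is planar; the non-degeneracy hypotheses make the diagram simple, so $H$ is in fact a triangulation -- the exact analogue of the Delaunay triangulation being the dual of the ordinary Voronoi diagram in the points-and-disks special case treated by \cite{MustafaR09}.

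The crux is the locality condition. Fix a disk $e$, a dominator $b \in B$, and a dominator $\rho \in R$, and let $Q_e$ be the union of the Voronoi cells of all disks of $B \cup R$ that dominate $e$. The basic mechanism is that for any point $x \in \disk e$ lying in $\disk d$ for some $d \in B \cup R$, the weighted-nearest site $s$ to $x$ satisfies $\delta(x,s) \le \delta(x,d) \le 0$, so $x \in \disk s \cap \disk e$ and $s$ dominates $e$; consequently $Q_e$ contains $\cell(b)$, $\cell(\rho)$, and a region joining them that runs through the centres $p_b$, $p_\rho$ and the part of $\disk e$ covered by disks of $B \cup R$. Showing that $Q_e$ is connected in this way lets one take a chain of pairwise-adjacent cells inside $Q_e$ from $\cell(b)$ to $\cell(\rho)$; since it starts at a disk of $B$ and ends at a disk of $R$, some consecutive pair consists of a disk of $B$ adjacent to a disk of $R$, and that adjacency is the required edge of $H$. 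I expect the main obstacle to be controlling this walk along the portion of $\partial\disk e$ that protrudes outside every disk of $B \cup R$, so as to guarantee that every cell it visits really does belong to a dominator of $e$; this is exactly where the no-containment normalization and the non-degeneracy assumptions should be needed.

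Given (i) and (ii), the theorem follows from the standard separator argument of \cite{ChanH09,MustafaR09}. A planar graph on $N$ vertices admits an $r$-division: a decomposition into pieces, each with at most $r$ vertices, in which every vertex interior to a piece has all its neighbours inside that piece, and such that the total number of boundary vertices, counted with multiplicity over the pieces, is $O(N/\sqrt r)$. Taking $r = b = c/\epsilon^2$ with $c$ a large enough constant makes this $O(\epsilon(|B|+|R|))$. For each piece $V_i$, delete from $B$ the disks interior to $V_i$ and insert the disks of $R \cap V_i$: this again produces a dominating set, since for a disk $e$ with $B$-dominator $b_e$, either $b_e$ survives (being non-interior to $V_i$), or $b_e$ is interior to $V_i$ and, by the locality edge $\{b,\rho\}$ with $b,\rho$ both dominating $e$, either $b$ survives or $b$ too is interior to $V_i$ and then its $H$-neighbour $\rho \in R \cap V_i$ is inserted. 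As the exchange removes at most $b$ disks, $b$-local optimality of $B$ gives $|B \cap (V_i \setminus \partial V_i)| \le |R \cap V_i|$; summing over all pieces and absorbing the boundary-incidence terms yields $|B| \le |R| + O(\epsilon(|B|+|R|))$, hence $|B| \le (1+O(\epsilon))|R| \le (1+O(\epsilon))\opt$. Rescaling $\epsilon$ completes the proof.
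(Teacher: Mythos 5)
Your overall architecture is the same as the paper's: normalize $R$ and $B$ so that no disk of $R\cup B$ is properly contained in another disk of $\D$, take the dual of the weighted Voronoi diagram of $R\cup B$ (weights = radii) as the planar graph, and finish with the standard separator/$r$-division exchange argument. Those parts of your write-up are essentially correct: non-emptiness of cells is indeed equivalent to the no-containment normalization (the center $c_d$ lies in $\cell(d)$), star-shapedness gives planarity of the dual, and your piece-by-piece exchange plus $b$-local optimality yields $|B|\le(1+O(\epsilon))|R|$ once the locality condition is in hand.

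The genuine gap is exactly at the locality condition, which is the paper's main new contribution, and your own text flags it: you never prove that some red dominator and some blue dominator of a given disk $e$ have adjacent cells. Your proposed route -- connectivity of the union $Q_e$ of dominator cells ``through the part of $\disk{e}$ covered by disks of $B\cup R$'' -- does not go through as stated: that covered portion of $\disk{e}$ need not be connected, and for points of $\disk{e}$ (or of $\partial\disk{e}$) lying outside every disk of $B\cup R$ there is no reason the nearest site should dominate $e$, so the chain of cells you want to follow can leave $Q_e$. The missing idea is to walk along a different, explicitly chosen path: let $r$ be the disk whose cell contains $c_e$ (it dominates $e$, and WLOG is red), let $b$ be the blue disk minimizing $\pow(c_e,b)$ (it dominates $e$), and walk along the straight segment $\overline{c_e c_b}$. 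For any point $x$ on this segment, the owner $d$ of the cell containing $x$ satisfies $\pow(c_e,d)\le d(c_e,x)+\pow(x,d)\le d(c_e,x)+\pow(x,b)=\pow(c_e,b)\le r_e$, where the middle equality uses collinearity of $c_e,x,c_b$; hence every cell the segment crosses belongs to a dominator of $e$, and the cell $\cell(r')$ from which the segment first enters $\cell(b)$ satisfies $\pow(c_e,r')<\pow(c_e,b)$ (strict by the non-degeneracy assumption that $c_{r'}$ is not on the line through $c_e$ and $c_b$), so $r'$ is red, dominates $e$, and is adjacent to $b$. Replacing your $Q_e$-connectivity sketch by this center-to-center segment walk closes the gap and recovers the paper's proof.
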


Let $R$ be the disks in an optimal solution; we may assume no disk in $R$ is 
properly contained in any other disk in $\DD$. Thus, no disk in
$R \cup B$ is properly contained in any other disk of $R \cup B$.  
Note that by the definition of PTAS, we need to show that $|B| \leq (1 + \epsilon)\cdot |R|$.  We will refer to $R$ as the set of red disks and $B$ as the set of blue disks.  Without loss of generality, we will assume that $R \cap B = \emptyset$, i.e. there is no disk that is both red and blue.  For a disk $u \in \DD$, we say a disk $v \in R \cup B$
is a \textit{dominator of u} if $u$ and $v$ intersect.  Similarly, we also say that $v$ \textit{dominates} $u$.

We must show the existence of an appropriate planar graph which relates the disks in $R$ with the disks in $B$. Here, we state the {\em locality condition} as per Mustafa and Ray \cite{MustafaR09}:

\begin{lemma}[{\bf Locality Condition}]
 \label{lem:locality}
There exists a planar graph with vertex set $R \cup B$, such that for every $d \in \DD$, there is a disk $u$ from amongst the red dominators of $d$ and a disk $v$ amongst the blue dominators of $d$ such that $\{u,v\}$ is an edge in the graph.
\end{lemma}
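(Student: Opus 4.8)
The plan is to build the required planar graph as the dual of the weighted (multiplicatively or additively weighted — here: additively weighted by radii, i.e.\ a \emph{power diagram}) Voronoi diagram of the disk centers in $R\cup B$, where each center carries its radius as weight. Concretely, for a point $x$ in the plane and a disk $d$ with center $c_d$ and radius $r_d$, assign the weighted distance $\|x-c_d\| - r_d$ (equivalently, use the power distance $\|x-c_d\|^2 - r_d^2$; the non-degeneracy assumptions — no three centers collinear, no four disks tangent to a circle — guarantee this diagram is a proper planar subdivision in which cells meet three at a time). Let $\pow(R\cup B)$ denote this diagram, restricted to the cells of the disks in $R\cup B$, and let the graph $\mathcal{G}$ be its dual: a vertex per disk in $R\cup B$, and an edge $\{u,v\}$ whenever $\cell(u)$ and $\cell(v)$ share a one-dimensional boundary piece. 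Since this dual is the dual of a planar subdivision drawn in the plane, $\mathcal{G}$ is planar; that disposes of the planarity requirement essentially for free.

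\textbf{Verifying the locality condition.} Fix an arbitrary disk $d\in\DD$. Since $B$ is a dominating set, $d$ has at least one blue dominator; likewise, since $R$ is a dominating set, $d$ has at least one red dominator. I want to exhibit a red dominator $u$ and a blue dominator $v$ of $d$ with $\{u,v\}\in E(\mathcal{G})$. The key geometric claim is: \emph{every disk $e\in R\cup B$ that is a dominator of $d$ contains, in its weighted Voronoi cell, at least one point of $\disk{d}$ — in fact the natural candidate is to look at which cell the center $c_d$ of $d$ falls in, and more generally trace the cells encountered along $\disk{d}$.} Let $u$ be the disk of $R\cup B$ whose cell contains $c_d$. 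One shows $u$ must in fact be a dominator of $d$: if $e$ dominates $d$ then $\disk{e}\cap\disk{d}\neq\emptyset$, so the weighted distance from $c_d$ to $c_e$ is at most $r_d$ (because $\|c_d - c_e\|\le r_d+r_e$ gives $\|c_d-c_e\|-r_e\le r_d$); and the cell winner $u$ has weighted distance to $c_d$ no larger than that of any dominator, hence weighted distance $\le r_d$, which back-translates to $\|c_d-c_u\|\le r_d+r_u$, i.e.\ $u$ dominates $d$. Now if $u\in R$ and simultaneously there is a blue disk whose cell touches $c_d$ appropriately we are done; in general, $u$ is one color, say blue, and we need a red dominator whose cell is adjacent to $\cell(u)$. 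For this I would walk along a curve from $c_d$ toward the center of some red dominator $w$ of $d$ (which exists), staying inside $\disk{d}$ as long as possible, and take the \emph{first} cell boundary crossed: it separates two disks $u'$ and $v'$, and the connectedness/convexity properties of power-diagram cells together with the domination inequality above force \emph{both} $u'$ and $v'$ to be dominators of $d$. Since the sequence of cells along the walk starts at a blue disk and must reach (or pass the region controlled by) a red disk, somewhere along the walk two consecutive cells — hence an edge of $\mathcal{G}$ — correspond to a red dominator and a blue dominator of $d$. That edge is the one the locality condition demands.

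\textbf{Where the replacement step and no-containment hypothesis enter.} The subtle point is that the walk argument needs the relevant portion of $\disk d$ to actually be covered by cells of \emph{dominators} of $d$, and that the red/blue color change really happens along a path living inside $\disk d$. This is exactly why the algorithm's final clean-up step (replacing any $u\in B$ properly contained in some $v\in\DD$ by the largest such $v$) was performed, and why we assumed no disk of $R$ is properly contained in another disk of $\DD$: these guarantee no disk of $R\cup B$ is nested inside another, which is what makes the weighted Voronoi cells of $R\cup B$ behave well near $\disk d$ — in particular it prevents a situation where a dominator's cell is ``swallowed'' and never appears on the boundary walk. I expect the technical heart of the proof, and the main obstacle, to be precisely this: proving that along the boundary between two adjacent cells $\cell(u)$ and $\cell(v)$ that the walk crosses inside $\disk d$, both $u$ and $v$ are genuine dominators of $d$ (i.e.\ the domination inequality is preserved at boundary points, using that a boundary point is equidistant in the weighted sense to $c_u$ and $c_v$ and lies inside $\disk d$). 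The planarity half of the lemma, by contrast, is immediate once the graph is identified as a planar dual.
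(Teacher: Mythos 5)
Your construction is the paper's: the dual of the additively weighted Voronoi diagram of $R\cup B$ (weights equal to radii), planarity from the fact that it is the dual of a planar subdivision, and your argument that the disk owning $\cell$ containing $c_d$ must dominate $d$ is exactly the paper's first step. But the heart of the lemma --- exhibiting a red dominator and a blue dominator whose cells are adjacent --- is precisely the step you leave open, and the mechanism you sketch for it would not work. Your plan is to walk from $c_d$ toward a red dominator ``staying inside $\disk{d}$ as long as possible'' and to argue that at a crossed boundary both incident cells are dominators because the crossing point lies inside $\disk{d}$ and the cells are convex. Neither ingredient is sound: (i) a point $x\in\disk{d}$ can perfectly well be owned by a disk of $R\cup B$ that does not intersect $d$ (take $d$ huge, a small non-dominating disk of $R\cup B$ just outside $\disk{d}$ near $x$, and all dominators of $d$ far away --- membership of $x$ in $\disk{d}$ gives no useful bound on $\pow(x,\cdot)$ of the owner), so ``inside $\disk{d}$'' is neither necessary nor sufficient; and (ii) the cells of the additively weighted Voronoi diagram are only star-shaped, not convex (your parenthetical that the distance $d(x,c_u)-r_u$ is ``equivalent'' to the power distance $\|x-c_u\|^2-r_u^2$ is also false --- these give different diagrams, and the domination translation $\pow(c_d,e)\le r_d \iff e\cap d\neq\emptyset$ that your own computation relies on holds only for the additive version).

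The paper closes exactly this gap with a short triangle-inequality computation along a \emph{straight} segment aimed at a carefully chosen target. Assuming $c_d\in\cell(r)$ with $r$ red, let $b$ be the blue disk minimizing $\pow(c_d,b)$; then $b$ dominates $d$, and by the no-containment property (this is where the clean-up step really enters, via the lemma that $c_b$ lies in $\cell(b)$) the segment $\overline{c_d c_b}$ ends inside $\cell(b)$. At the first point $x$ where the segment enters $\cell(b)$, the cell $\cell(r')$ being exited satisfies
$\pow(c_d,r') < d(c_d,x)+\pow(x,r') = d(c_d,x)+\pow(x,b) = \pow(c_d,b) \le r_d$,
so $r'$ dominates $d$, and since $r'$ is strictly closer to $c_d$ than the closest blue disk, $r'$ must be red; hence $\{r',b\}$ is the desired bichromatic edge. (A variant of your walk can be repaired the same way: along the straight segment from $c_d$ to the center of \emph{any} dominator $w$, every cell owner $o$ satisfies $\pow(c_d,o)\le \pow(c_d,w)\le r_d$ and so dominates $d$; a color change along the cell sequence then yields the edge. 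The point is that the guarantee comes from this weighted-distance inequality along the segment, not from staying inside $\disk{d}$.) As written, your proposal names the right graph but does not prove the locality property, and the justification it proposes for the key step is incorrect.
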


Section \ref{sub:graph} is devoted to a proof of Lemma
\ref{lem:locality}.  We then describe the argument 
(from \cite{MustafaR09}) that uses the lemma to show that 
$|B| < (1+\epsilon)|R|$.

\subsection{Establishing the Locality Condition}
\label{sub:graph}
This section is devoted to the proof of Lemma \ref{lem:locality}, that is,
the construction of an appropriate planar graph which satisfies the locality
condition.

\paragraph{Weighted Voronoi Diagram.}
We will be using a generalization of Voronoi diagrams called a \textit{weighted Voronoi Diagram} (WVD).  Instead of defining cells with respect to a set of points, we will be defining cells with respect to red and blue disks.  In order to 
do this generalization for disks, we must define the distance between a point in the plane and a disk.

Let $u$ be a disk and let $x$ be a point in the plane.  We define $\pow(x,u) = d(x,c_u) - r_u$ where $c_u$ is the center of $u$, $r_u$ is the radius of $u$, and $d(x,c_u)$ is the Euclidean distance between $x$ and $c_u$.  Intuitively, for a point $x$, $\pow(x,u)$ is the Euclidean
distance from $x$ to the boundary of u; the distance to a disk is 
negative for points that are strictly inside the disk.  Alternatively, if
$x \not\in u$, then $\pow(x,u)$ is the amount we would need
to increase the radius of $u$ so that $x$ lies on the boundary of 
$u$; if $x \in u$, then $\pow(x,u)$ is the negative of the 
amount we would need to decrease the radius of $u$ so that $x$ lies on 
the boundary of $u$.  See Figure \ref{fig:pow} for an illustration.  

\begin{figure}[htpb]
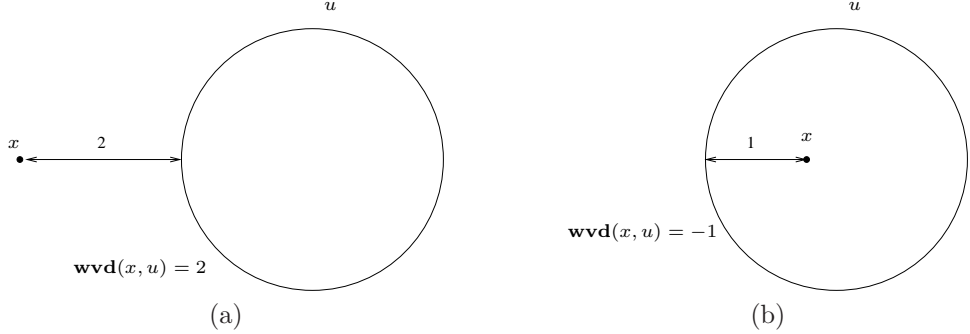

\centering
\begin{tabular}{c@{\hspace{0.1\linewidth}}c}
\input{distance3.pstex_t} &
\input{distance6.pstex_t} \\
(a) & (b)
\end{tabular}
\caption{An illustration for the distances used in our WVD.  (a) $\pow(x,u)$ when $x$ is not in $u$.  (b) $\pow(x,u)$ when $x$ is in $u$.}
\label{fig:pow}
\end{figure}

For a disk $u$ in
any collection of disks, let $\cell(u)$ be the set of points $x$ in the plane such
that $\pow(x,u) \leq \pow(x,v), u \neq v$.  The cells of all the disks in the
collection induce a decomposition of the plane, and this is the WVD. This
is just the standard weighted Voronoi diagram of the centers of the disks,
where the weight of the center of a disk is simply the radius of the disk
\cite{a91}.

Consider the WVD of the disks in $R \cup B$.
First, we will show that for every $u \in R \cup B$, $u$ has a non-empty
cell in the WVD.  That is, there is some point in the plane that is 
closer to $u$ than it is to any other red or blue disk.

\begin{lemma}
In the weighted Voronoi diagram of the union of red and
blue disks, the cell of every disk $u$ is nonempty.  Moreover, $c_u$ (the center of $u$) belongs only to $\cell(u)$.
\label{lem:nonemptycell}
\end{lemma}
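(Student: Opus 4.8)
\textbf{Proof plan for Lemma~\ref{lem:nonemptycell}.}
The plan is to prove the ``moreover'' clause first, since it subsumes nonemptiness: once we know $c_u\in\cell(u)$, that cell is visibly nonempty. So it suffices to show that for every disk $v\in R\cup B$ with $v\neq u$ we have the \emph{strict} inequality $\pow(c_u,u)<\pow(c_u,v)$.

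Evaluating both sides is immediate: $\pow(c_u,u)=d(c_u,c_u)-r_u=-r_u$, while $\pow(c_u,v)=d(c_u,c_v)-r_v$, so the claimed inequality is equivalent to $r_v<r_u+d(c_u,c_v)$. I would prove this by contradiction. Suppose instead that $r_v\geq r_u+d(c_u,c_v)$. Then for every point $p\in\disk{u}$ the triangle inequality gives $d(p,c_v)\leq d(p,c_u)+d(c_u,c_v)\leq r_u+d(c_u,c_v)\leq r_v$, so $p\in\disk{v}$; hence $\disk{u}\subseteq\disk{v}$. Since $u$ and $v$ are distinct disks of $R\cup B$ (recall $R\cap B=\emptyset$), they cannot define the same point set, so in fact $\disk{u}\subsetneq\disk{v}$, a \emph{proper} containment. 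But this is exactly what the construction rules out: the final replacement step guarantees that no blue disk is properly contained in another disk of $\DD$, and the optimal solution $R$ is assumed to contain no disk properly contained in another disk of $\DD$; hence no disk of $R\cup B$ is properly contained in another disk of $R\cup B$. This contradiction establishes the strict inequality for all $v\neq u$.

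Finally I would read off both conclusions. From $\pow(c_u,u)<\pow(c_u,v)$ for all $v\neq u$, the point $c_u$ satisfies (strictly) every defining inequality of $\cell(u)$, so $c_u\in\cell(u)$ and $\cell(u)\neq\emptyset$. And for any $v\neq u$, membership $c_u\in\cell(v)$ would force $\pow(c_u,v)\leq\pow(c_u,u)$, contradicting what we just proved; thus $c_u$ lies in no cell other than $\cell(u)$. The whole argument is short, and there is no real obstacle; the only point meriting a moment's care is that a containment $\disk{u}\subseteq\disk{v}$ between two \emph{distinct} disks is automatically proper, which is what lets us invoke the no-proper-containment invariant even in the borderline internally-tangent case $r_v=r_u+d(c_u,c_v)$. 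Note that the non-degeneracy assumptions (no three collinear centers, no four disks tangent to a common circle) are not used in this lemma; they enter later, when arguing that the dual of the WVD is planar.
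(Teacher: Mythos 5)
Your proposal is correct and follows essentially the same argument as the paper: assume the inequality fails for some $v \neq u$, deduce $r_v \geq d(c_u,c_v) + r_u$, conclude that $u$ is properly contained in $v$, and contradict the invariant that no disk of $R \cup B$ is properly contained in another. The only difference is presentational (you phrase it as a strict inequality established for all $v \neq u$ rather than a contradiction from $c_u \in \cell(v)$), which changes nothing of substance.
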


\begin{proof}
We will show that $c_u$ is only in $\cell(u)$.  Suppose for the sake of contradiction that $c_u \in \cell(v)$ such that $u \neq v$.  This means that 
$\pow(c_u,v) \leq \pow(c_u,u) = d(c_u, c_u) - r_u = -r_u$.  So, $-r_u \geq 
\pow(c_u, v) = d(c_u, c_v) - r_v \Rightarrow r_v \geq d(c_u, c_v) + r_u$.  
This implies that $u$ is contained in $v$, and since the two disks
are not the same, the containment is proper. But this is a contradiction, since
no disk in $R \cup B$ contains another such disk. 
\end{proof}

\paragraph{The Graph.}

Any cell in the WVD of $R \cup B$ is star-shaped with respect to the
center of the corresponding disk.  That is, for every point $y \in \cell(u)$, 
the segment $\overline{c_u y}$ is contained within $\cell(u)$.

The graph for the locality condition is simply the dual of the WVD of
$R \cup B$.  That is, for each cell in the WVD there is a vertex, and there is 
an edge between two vertices if and only if their corresponding cells share a 
boundary in the  diagram (that is, if and only if there is a point
in the plane equidistant from the two disks). The graph is planar -- exploiting
the fact that the cells are star-shaped, the edges can easily be drawn 
so that no two edges intersect \cite{a91}.  

\begin{corollary}
The dual of the power diagram of $R \cup B$ is a planar graph.
\label{cor:dual}
\end{corollary}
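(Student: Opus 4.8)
The plan is to exhibit an explicit crossing-free drawing of the dual, using the star-shapedness of the WVD cells together with Lemma~\ref{lem:nonemptycell}. For each disk $u \in R \cup B$, draw the dual vertex for $\cell(u)$ at the disk center $c_u$. By Lemma~\ref{lem:nonemptycell}, $c_u$ lies in $\cell(u)$ and in no other cell, so distinct disks get distinct vertices, each placed in the interior of its own cell. It then remains to route the dual edges.

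For each pair $u,v$ whose cells are adjacent (i.e. $\cell(u)\cap\cell(v)$ is more than a single point — equivalently there is a point equidistant from $u$ and $v$ in the $\pow$ sense and no closer to any third disk), pick a point $p_{uv}$ in the relative interior of the shared boundary arc $\cell(u)\cap\cell(v)$, away from the (finitely many) vertices of the WVD; the non-degeneracy assumptions ensure these arcs are one-dimensional, so such a $p_{uv}$ exists, and we may take the chosen points pairwise distinct. Draw the dual edge $\{u,v\}$ as the concatenation of the segments $\overline{c_u\,p_{uv}}$ and $\overline{p_{uv}\,c_v}$. Since each WVD cell is star-shaped with respect to its own center, $\overline{c_u\,p_{uv}}\subseteq\cell(u)$ and $\overline{p_{uv}\,c_v}\subseteq\cell(v)$, and because $p_{uv}$ is interior to a boundary arc these segments touch $\partial\cell(u)$, resp. $\partial\cell(v)$, only at $p_{uv}$.

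To finish I would check that no two of these curves cross. Two dual edges sharing an endpoint, say $\{u,v\}$ and $\{u,w\}$, restrict inside $\cell(u)$ to the segments $\overline{c_u\,p_{uv}}$ and $\overline{c_u\,p_{uw}}$; as $p_{uv}$ and $p_{uw}$ lie on distinct boundary arcs of $\cell(u)$ and $c_u$ is an interior kernel point, the ray from $c_u$ through one of them meets $\partial\cell(u)$ only at that point, so the two segments meet only at $c_u$. Two dual edges with no common endpoint lie, segment by segment, in the interiors of four distinct cells, except where they touch the $1$-skeleton of the WVD, and there they touch only at the distinct chosen points $p_{\cdot\cdot}$; hence they are disjoint. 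This yields a planar embedding, proving Corollary~\ref{cor:dual}. The step needing the most care is this crossing-freeness check in the presence of WVD vertices of degree exceeding three and of unbounded cells — one must be sure the points $p_{uv}$ can be taken in general position on the skeleton and that \emph{star-shapedness} (convexity fails for these additively weighted cells) still forces each center-to-boundary segment to leave its cell only at its boundary endpoint; this is precisely where the non-degeneracy hypotheses and the strict interiority of $c_u$ from Lemma~\ref{lem:nonemptycell} are used. Alternatively, one may simply invoke the standard fact that the dual of such a weighted Voronoi diagram is planar~\cite{a91}.
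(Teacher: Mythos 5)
Your proposal is correct and follows essentially the same route as the paper, which also obtains planarity by drawing each dual vertex at the disk center (using Lemma~\ref{lem:nonemptycell}) and routing each edge through the shared cell boundary, relying on star-shapedness of the weighted Voronoi cells and the standard reference~\cite{a91}. You simply spell out the crossing-freeness details that the paper leaves as a cited/easy step.
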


Because every red and blue disk has a nonempty cell in the WVD, every such disk will also have a corresponding vertex in our planar graph.  We are now ready to show that for each $d \in \DD$, there is a disk $u$ from amongst the red dominators of $d$ and a disk $v$ amongst the blue dominators of $d$ such that $\cell(u)$ and $\cell(v)$ share a boundary in the WVD.  This would then imply that their corresponding vertices in the graph share an edge, completing the proof of Lemma \ref{lem:locality}.  For simplicity, if there is an edge connecting the vertex corresponding to $\cell(u)$ and the vertex corresponding to $\cell(v)$, then we will simply say there is an edge connecting $u$ and $v$.

\begin{lemma}
In the dual graph of the weighted Voronoi diagram for $R \cup B$, for an arbitrary
input disk $u \in \DD$, there is an edge between some red
dominator of $u$ and some blue dominator of $u$.
\label{lem:rbedge}
\end{lemma}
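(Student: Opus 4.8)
The plan is to use disk centers as anchors and to walk along a straight segment from $c_u$ to the center of a nearest blue disk, showing that every cell met along the way belongs to a dominator of $u$; since the walk starts in a red cell and ends in a blue cell, somewhere it crosses a red--blue boundary, and that boundary supplies the desired edge.

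First I would record the dictionary between domination and the $\pow$-distance: for $v \in R\cup B$, $v$ dominates $u$ (that is, $\disk{v}\cap\disk{u}\neq\emptyset$) if and only if $d(c_u,c_v)\le r_u+r_v$, i.e. if and only if $\pow(c_u,v)\le r_u$. Let $w\in R\cup B$ be a disk with $c_u\in\cell(w)$; then $\pow(c_u,w)=\min_{v\in R\cup B}\pow(c_u,v)$. Since $R$ is a dominating set, some red disk dominates $u$, so this minimum is at most $r_u$ and hence $w$ itself dominates $u$. Assume without loss of generality that $w$ is red (otherwise interchange the roles of red and blue throughout; the statement and the construction are colour-symmetric). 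Since $B$ is a dominating set, the disk $b^{*}\in B$ minimizing $\pow(c_u,\cdot)$ over $B$ satisfies $\pow(c_u,b^{*})\le r_u$, so $b^{*}$ is a blue dominator of $u$, and by Lemma~\ref{lem:nonemptycell} we have $c_{b^{*}}\in\cell(b^{*})$ (in particular $c_u\neq c_{b^{*}}$, since $c_u$ lies in the red cell $\cell(w)$).

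The heart of the argument is the following claim: \emph{for every point $p$ on the segment $\overline{c_u c_{b^{*}}}$ and every disk $z$ with $p\in\cell(z)$, $z$ dominates $u$.} Indeed, $d(c_u,c_z)\le d(c_u,p)+d(p,c_z)=d(c_u,p)+\pow(p,z)+r_z$, so $\pow(c_u,z)=d(c_u,c_z)-r_z\le d(c_u,p)+\pow(p,z)$; since $p\in\cell(z)$ we have $\pow(p,z)\le\pow(p,b^{*})=d(p,c_{b^{*}})-r_{b^{*}}$, and since $p$ lies on the segment, $d(c_u,p)+d(p,c_{b^{*}})=d(c_u,c_{b^{*}})$. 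Combining, $\pow(c_u,z)\le d(c_u,c_{b^{*}})-r_{b^{*}}=\pow(c_u,b^{*})\le r_u$, which is exactly the condition that $z$ dominates $u$.

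Finally I would traverse $\overline{c_u c_{b^{*}}}$ from $c_u$ to $c_{b^{*}}$: it starts in $\cell(w)$ with $w$ red, ends in $\cell(b^{*})$ with $b^{*}$ blue, and passes through a finite sequence of cells $\cell(z_0),\cell(z_1),\dots,\cell(z_k)$ with $z_0=w$, $z_k=b^{*}$, where consecutive cells $\cell(z_i),\cell(z_{i+1})$ share a boundary point (the point at which the segment leaves one and enters the next). Since $z_0$ is red and $z_k$ is blue, some index $i$ has $z_i$ and $z_{i+1}$ of different colours; by the claim both dominate $u$, and since their cells share a boundary there is an edge between them in the dual graph of the WVD, which is the edge required. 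The step I expect to need the most care is making ``a finite sequence of cells sharing boundaries'' precise: one wants the segment to avoid the finitely many vertices of the WVD (or to perturb $c_{b^{*}}$, equivalently the segment, infinitesimally), and one uses the general-position assumptions together with the fact that an additively weighted bisector is an arc of a hyperbola and so cannot contain a sub-segment, so that the segment meets each cell in a single sub-interval and crosses cleanly between adjacent cells. The inequality in the claim is short but is the real content; the rest is bookkeeping on top of Corollary~\ref{cor:dual} and Lemma~\ref{lem:nonemptycell}.
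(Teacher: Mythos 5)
Your proposal is correct and follows essentially the same route as the paper: walk along the segment from $c_u$ to the center of the nearest blue disk and use the triangle-inequality computation $\pow(c_u,z)\le d(c_u,p)+\pow(p,z)\le\pow(c_u,b^{*})$ to show every cell met is a dominator of $u$. The only (harmless) difference is that you take the edge at the first red--blue color change along the walk, whereas the paper looks specifically at the cell entered just before $\cell(b)$ and invokes the non-degeneracy assumption to make the inequality strict; your variant sidesteps that strictness argument.
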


\begin{proof}
 Consider the WVD of $R \cup B$.  Without loss of generality, assume
 $c_u \in \cell(r)$ for some $r \in R$.  Now, $r$ must be a dominator of $u$, because $r$ is the closest disk in $R \cup B$ to $c_u$.  If $r$ does not dominate $u$, $u$ is not dominated by any disk in $R \cup B$ which contradicts the fact that both $R$ and $B$ are dominating sets.  

Let $b$ denote a closest blue disk to $c_u$, that is $\pow(c_u,b) \leq \pow(c_u, b')$ for all other blue disks $b'$.  Note that $b$ must dominate $u$, because if it did not, then no blue disks would dominate $u$.  This would contradict the fact that $B$ is a dominating set.  Also, note that for any disk $d \in \DD$ such that $\pow(c_u,d) \leq \pow(c_u,b)$, $d$ must intersect with $u$.  

If $\pow(c_u,b) = \pow(c_u,r)$, we are done, since then there is an
edge in the dual graph incident on $r$ and $b$. So, let us assume
that $\pow(c_u,b) > \pow(c_u,r)$.

We will walk from $c_u$ to $c_b$ along the straight
line segment $\overline{c_uc_b}$. The proof strategy is that during this
walk, we will be crossing red cells and at some point 
before reaching $c_b$ we will enter a
blue cell, in particular, $\cell(b)$.  We must have
entered this cell from a red cell $\cell(r')$ which shares a boundary 
with $\cell(b)$, and thus $\{r',b\}$ is an 
edge in our planar graph.  Moreover, we will argue that $r'$
necessarily dominates $u$, completing the proof.

As seen in the proof of Lemma~\ref{lem:nonemptycell}, $c_b
\in \cell(b)$, and thus we will enter $\cell(b)$ at some point in time 
along our walk from $c_u$ to $c_b$. 
Let $x$ be the point at which we first enter $\cell(b)$.  Then $x$ is on the
boundary of $\cell(b)$ and $\cell(r')$ for some $r' \in R \cup B$. If
$r' = r$, we are done. Otherwise, we have

$$\pow(c_u, r') < d(c_u, x) + \pow(x,r') = d(c_u,x) + \pow(x,b) = \pow(c_u,b).$$

\noindent (Here the strictness of the first inequality comes from our
non-degeneracy assumption which implies that $c_{r'}$ cannot lie on the
line through $c_u$ and $c_b$.) Now, it must be the case that $r' \in R$ because 
$\pow(c_u,r') < \pow(c_u,b)$ and $b$ is the closest blue disk to $c_u$.  This 
also implies that $r'$ must dominate $u$.  See Figure~\ref{figure:rbedge} for 
an illustration.


\begin{figure}[h]
\centering
\includegraphics[height = 2in]{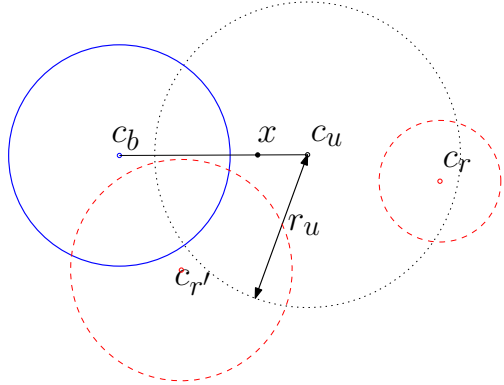} 
\caption{Proof of Lemma~\ref{lem:rbedge}. The dotted disk is $u$ with
center $c_u$ and radius $r_u$. The two red disks $r$ and $r'$ are shown
as dashed disks with centers $c_r$ and $c_{r'}$, respectively. The only
blue disk $b$ is shown as a solid disk with center $c_b$.}
\label{figure:rbedge}
\end{figure}

Therefore $\cell(b)$ and $\cell(r')$ share a boundary implying that the edge
$\{b,r'\}$ is in our graph.  Moreover, $b$ is blue, $r'$ is red, and both
dominate $u$, which completes the proof. 
\end{proof}

Together, Corollary~\ref{cor:dual} and Lemma~\ref{lem:rbedge} prove
Lemma~\ref{lem:locality}.

\subsection{Proof of Theorem \ref{thm:main}}
To show $|B| \leq (1 + \epsilon)\cdot |R|$, we make use of the planar graph separator theorem of Frederickson \cite{Frederickson87}.  This argument is similar to the work in \cite{ChanH09,MustafaR09} and is only given here for completeness.  Given a graph $G = (V,E)$, we denote $N(V')$ for subset of the vertices $V'$ to be the set of all vertices in $V$ that share an edge with a vertex in $V'$.

\begin{theorem}[Frederickson \cite{Frederickson87}]
There are constants $c_1, c_2, c_3 > 0$, such that for any planar graph
$G=(V,E)$ with $n$ vertices and a parameter $r \geq 1$, there is a set
$X \subseteq V$ of size at most $c_1 n/\sqrt{r}$, and a partition
of $V \setminus X$ into $n/r$ sets $V_1, V_2, \ldots, V_{n/r}$,
satisfying: (i) $|V_i| \leq c_2 r$, (ii) $N(V_i) \cap V_j = \emptyset$, for $i \neq j$, and
(iii) $|N(V_i) \cap X| \leq c_3 \sqrt{r}$.
\label{thm:septhm}
\end{theorem}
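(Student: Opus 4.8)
The plan is to deduce the statement from the existence of an \emph{$r$-division} of $G$ --- a family of $O(n/r)$ vertex-induced subgraphs (``regions'') whose edge sets partition $E(G)$ and which together cover $V$, such that each region has $O(r)$ vertices and $O(\sqrt r)$ \emph{boundary vertices} (vertices that lie in more than one region), and such that the number of distinct boundary vertices is $O(n/\sqrt r)$ in total --- and to construct such an $r$-division by recursively invoking the Lipton--Tarjan planar separator theorem. For $r$ below a fixed absolute constant the theorem is trivial: take $X=V$, so that $V\setminus X$ is empty and is ``partitioned'' into $n/r$ empty sets, and note $|X|=n\le c_1 n/\sqrt r$ once $c_1$ is large enough. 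So I may assume throughout that $r$ exceeds a sufficiently large absolute constant.

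The first step is to show that an $r$-division yields the theorem. I would set $X$ to be the set of all boundary vertices and put $V_i := V(\mathrm{region}_i)\setminus X$. Since each non-boundary vertex lies in exactly one region, the $V_i$ partition $V\setminus X$. If $u\in V_i$ and $v\in V_j$ with $i\neq j$ were adjacent, the edge $uv$ would lie in some region, which --- as $u$ is non-boundary --- must be region $i$, forcing $v$ into region $i$ and hence into $X$, a contradiction; so $N(V_i)\cap V_j=\emptyset$, giving (ii). The same observation shows every neighbour of $V_i$ outside $V_i$ lies in $V(\mathrm{region}_i)\cap X$, which has size $O(\sqrt r)$, giving (iii); and $|V_i|\le|V(\mathrm{region}_i)|=O(r)$ gives (i), while $|X|=O(n/\sqrt r)$ is immediate. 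To get exactly $n/r$ parts I pad with empty sets when there are too few, and merge parts into groups of $O(1)$ when there are too many (merging preserves (ii) and inflates the constants in (i) and (iii) by only $O(1)$).

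To build the $r$-division I would recurse with the separator theorem: while some current piece has more than $r$ vertices, split it by a separator into two pieces each of size at most $\frac{2}{3}$ that of the parent. Because a separator of an $s$-vertex planar piece has $O(\sqrt s)$ vertices, for $s$ large both children retain at least (say) $s/4$ vertices, so every leaf piece has between $r/4$ and $r$ vertices and there are $O(n/r)$ leaves; the regions are the leaf pieces with all adjacent separator vertices added back, so their edge sets partition $E$ and their sizes are $O(r)$. This first phase does \emph{not} control per-region boundary sizes (a small leaf piece can be adjacent to $\Theta(\sqrt n)$ separator vertices removed high in the recursion), so in a second phase I would, for each region carrying more than $c\sqrt r$ boundary vertices, recursively re-separate it using the \emph{weighted} separator theorem with unit weight on its boundary vertices, until each resulting region has $O(\sqrt r)$ boundary vertices; this keeps all sizes $O(r)$ and, since $\sum_P(\#\text{boundary of }P)=O(n/\sqrt r)$, multiplies the number of regions by only a constant in aggregate.

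The hard part, and the step I expect to be the main obstacle, is the accounting that makes $|X|=O(n/\sqrt r)$ hold with \emph{no} spurious logarithmic factor, simultaneously with the per-region bound (iii). Summing separator sizes over a single level of the recursion tree, concavity gives $\sum_P\sqrt{|V(P)|}\le\sqrt{(\#P)\,n}=O(n/\sqrt r)$ since $\#P=O(n/r)$; but a naive sum over the $O(\log(n/r))$ levels loses a logarithm, whereas the theorem asks for the clean $c_1 n/\sqrt r$. Removing this factor --- by charging each separator against the vertices it deletes and exploiting that piece sizes shrink geometrically along every root-to-leaf path, so that the separator costs along any path telescope into a geometric series of total $O(\sqrt s)$ for a root piece of size $s$, together with the truncation at size $r$ --- is precisely Frederickson's technical contribution~\cite{Frederickson87}. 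An analogous charging argument applied inside each re-separated region shows the second phase adds only $O(n/\sqrt r)$ further boundary vertices and $O(n/r)$ further regions, which is within budget; the remaining verifications are routine bookkeeping.
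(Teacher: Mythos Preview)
The paper does not prove this theorem at all: it is stated with a citation to Frederickson~\cite{Frederickson87} and used as a black box in the proof of Theorem~\ref{thm:main}. There is therefore no ``paper's own proof'' to compare against.

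That said, your sketch is a reasonable outline of Frederickson's original argument --- the two-phase construction of an $r$-division (first recursively separate to size $O(r)$, then re-separate to enforce the $O(\sqrt r)$ per-region boundary bound), followed by reading off $X$ as the boundary vertices and $V_i$ as the interiors --- and you correctly identify the removal of the $\log(n/r)$ factor in the total separator size as the nontrivial step. If you actually want to \emph{prove} the theorem rather than cite it, you would need to fill in that charging argument carefully; as written you have named the difficulty and deferred to~\cite{Frederickson87} for its resolution, which is circular if the goal is a self-contained proof. For the purposes of this paper, however, a citation is all that is required.
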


We will now show that $|B| \leq (1+\epsilon) |R|$;
this is similar to \cite{ChanH09,MustafaR09,GibsonKKV09}.  Let $r \equiv b/(c_2+c_3)$ (where $b$ is the parameter from the local search algorithm).
From Theorem~\ref{thm:septhm}(i),(iii), we get $|V_i \cup N(V_i)| \leq c_2r + c_3\sqrt{r} \leq b$.
Let $R_i = R \cap V_i$ and $B_i = B \cap V_i$. Due to the optimality of local search, we must have
$|B_i| \leq |R_i| + |N(V_i)|$, otherwise local search can replace $B_i$ with $R_i \cup N(V_i)$ to obtain
a smaller dominating set, contradicting the local optimality of local search. This is why we require that for each $d \in \DD$, there is a red dominator of $d$ and a blue dominator of $d$ with an edge in the graph.  If there were no such edge, then making this swap could possibly leave some disks without a dominator.  So now we have,
\begin{eqnarray*}
  |B| & \leq & |X| + \displaystyle\sum_i |B_i| \leq |X| + 
		\displaystyle\sum_i |R_i| +
		\displaystyle\sum_i |N(V_i)| \leq |R| + c \frac{|R| + |B|}{\sqrt{r}} \\
	& \leq & |R| + c' \frac{|R| + |B|}{\sqrt{b}},
\end{eqnarray*}
where $c$ and $c'$ are positive constants. With $b$ a large enough constant times $1/\epsilon^2$, it
follows that $|B| \leq (1+\epsilon) |R|$.

\section{The Weighted Dominating Set Case}
\label{sec:weighted}
In this section, we study a classical generalization of the dominating
set problem. Each disk $u$ now has an associated rational weight, $w_u$. 
The goal is to find a dominating set $D$ having the lowest cost, that
is, $\wt{D} = \sum_{u \in D} w_u$ be as small as possible.
We will prove the following theorem:

\begin{theorem}
Given a disk graph, $G=(V,E)$ of $n$ weighted disks $D$ in the 
plane, there is a randomized algorithm that produces a dominating set $V'
\subseteq V$, and $\wt{V'} \leq 2^{O(\log^*n)} \cdotp \opt$, w.h.p.,
where $\opt$ denotes the cost of an optimal solution.
\label{thm:mainthm}
\end{theorem}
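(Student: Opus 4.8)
The plan is to follow Varadarajan's framework for weighted geometric set cover (quasi-uniform sampling), adapting it to the dominating set problem via the dual weighted Voronoi diagram that we already used in Section~\ref{sec:dom}. Recall that MDS for disk graphs is equivalent to a set-cover--type problem: for each disk $d \in \D$ we must pick a disk from $\D$ that dominates $d$ (intersects it). Concretely, consider the range space in which the ``ground set'' is $\D$ and, for each $d \in \D$, the set $S_d \subseteq \D$ consists of all disks intersecting $d$; we seek a minimum-weight subfamily that hits every $S_d$. (This is exactly the set-cover dual of the hitting-set view.) Varadarajan's result gives a $2^{O(\log^* n)}$-approximation for weighted set cover \emph{provided} the underlying set system admits a good ``union complexity'' bound, or more precisely provided one can prove a shallow-cell / separator-type property that yields, for any weighted subfamily, a quasi-uniform sample of expected small weight that is a set cover. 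So the task reduces to exhibiting the structural property of our set system that feeds this black box.

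The key step is therefore to show the ``clustering/separator'' hypothesis of \cite{Varadarajan10} holds for the dominating-set range space of disks. I would do this exactly as in the unweighted case: given any subfamily $B \subseteq \D$ (playing the role of the current fractional/integral solution restricted to a sampling step), form the weighted Voronoi diagram (power diagram) of the centers of $B$ weighted by radii, and take its dual planar graph $G_B$ on vertex set $B$. Lemma~\ref{lem:nonemptycell} and Lemma~\ref{lem:rbedge} (with the roles of $R\cup B$ replaced by $B$ alone, and the "red" disk replaced by the nearest disk in $B$) show that for every $d \in \D$, the disks of $B$ dominating $d$ induce a connected — in fact, edge-containing — substructure in $G_B$: the nearest disk in $B$ to $c_d$ dominates $d$, and walking along a segment produces dominators adjacent in $G_B$. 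Thus $G_B$ is a planar graph on $B$ such that for each $d$ the dominators of $d$ in $B$ span a connected subgraph. This is precisely the structural input Varadarajan needs: planarity of $G_B$ gives, via the planar separator theorem, the partition into cells of bounded size with small boundary, and the connectivity-of-dominators property guarantees that the sampled cover indeed dominates every $d$. Feeding this into the quasi-uniform sampling machinery yields a random dominating set of expected weight $2^{O(\log^* n)}\cdot\opt$, and a standard repetition/Markov argument boosts this to hold with high probability.

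The steps, in order, are: (1) write MDS for disks as the weighted set-cover instance $(\D, \{S_d\}_{d\in\D})$; (2) state the precise hypothesis of Varadarajan's theorem (a planar ``domination graph'' on any subfamily such that each element's hitters form a connected subgraph, equivalently the shallow-cell-complexity bound it implies); (3) verify this hypothesis by reusing the power-diagram dual of Section~\ref{sub:graph}, restated for a single family $B$ — here Lemma~\ref{lem:nonemptycell} gives nonempty cells and the walk argument of Lemma~\ref{lem:rbedge} gives the adjacency of dominators; (4) invoke Varadarajan's algorithm and analysis to obtain the $2^{O(\log^*n)}$ bound in expectation; (5) amplify to w.h.p. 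The main obstacle is step~(2)--(3): matching our power-diagram structure to the \emph{exact} form of the hypothesis in \cite{Varadarajan10}. Varadarajan's framework is stated for set systems with bounded union complexity of the ranges; one must check that the planar domination graph we produce is enough, or equivalently translate ``planar graph on $B$ with connected hitter-sets'' into the ``linear union complexity / small separators'' language the framework consumes. Once that translation is made — and it is, since a planar graph has $O(|B|)$ edges and admits $O(\sqrt{|B|})$-separators, exactly the regime where the framework gives $2^{O(\log^*n)}$ — the rest is a direct appeal to the cited result.
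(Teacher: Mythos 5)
Your high-level plan coincides with the paper's: reduce weighted MDS to a weighted set-cover-type problem via an LP, apply Varadarajan-style quasi-uniform sampling recursively, and amplify the expectation bound to w.h.p.\ by Markov plus $O(\log n)$ independent repetitions. The gap is in your steps (2)--(3), which is exactly where the paper's new technical content lies. Varadarajan's framework does not consume ``a planar graph on $B$ in which each element's dominators span a connected subgraph'' together with planar separators; that is the locality condition of the \emph{unweighted} local-search PTAS of Section~\ref{sec:dom} (the Mustafa--Ray/Chan--Har-Peled framework), a different machine. What the quasi-uniform sampling argument needs is a shallow-cell-complexity (union-complexity-type) bound: for any subfamily $S$ of $m$ disks and any $L$, the number of \emph{distinct neighborhoods} of size at most $L$ that disks of the input realize in $S$ must be near-linear in $m$. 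The paper proves precisely this as Lemma~\ref{lem:intersectionlemma}: there are $O(mL^2)$ such equivalence classes. Its proof has nothing to do with planarity or separators; it is a Clarkson-style random sampling argument that reduces the count to the expected number of $0$-canonical disks with respect to a sample $S'$, and that number is bounded linearly (Claim~\ref{linearbound}) by the number of vertices of the weighted Voronoi diagram of $S'$ --- this is how the WVD enters the weighted case, not as a dual ``locality'' graph.

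Concretely, two assertions in your step (3) do not hold up. First, the ``connected dominator set'' property you attribute to the walk argument is stronger than what Lemma~\ref{lem:rbedge} establishes (it produces a single red--blue edge among dominators, not connectivity of all dominators in the dual graph), and no proof of connectivity is offered. Second, and more importantly, your claimed translation --- ``a planar graph has $O(|B|)$ edges and $O(\sqrt{|B|})$-separators, which is exactly the regime where the framework gives $2^{O(\log^* n)}$'' --- is asserted rather than proved, and it is not the hypothesis the framework consumes. Without the counting bound of Lemma~\ref{lem:intersectionlemma} (or an equivalent shallow-cell bound), the sparsification step of Theorem~\ref{thm:sparsify} cannot be run: one needs that some disk $d_j$ meets only $O(L^2)$ shallow equivalence classes in order to build the ordering $\sigma$ and to bound the probability of a forced inclusion by $O(1/L)$, which is what makes each recursive round cost a factor $O(\log L_i/L_i)$ and the whole recursion cost $2^{O(\log^* n)}$. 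You would also need to spell out the LP rounding with $\lfloor x_d/(1/2n)\rfloor$ copies so that every disk starts out $n$-covered with total weight $O(n\lambda^*)$, but that part is routine; the missing idea is the shallow-cell counting lemma and its WVD-based proof.
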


The high-level structure of the algorithm is as follows: 
we first solve a natural linear 
programming relaxation, followed by a randomized rounding step; this
step allows us to ignore the weights of the disks in the sampling
(pruning) stage. 
In the rounding step, we make several copies of the disks to ensure
that two properties hold.  First, every disk in $D$ is covered by 
at least $n$ of the copies.  Second, the weight of the copies is 
$O(n \cdot \lambda^*)$, where $\lambda^*$ is the objective function 
value of an optimal LP solution.  Following this step, we
recursively apply a randomized pruning step where we remove some of the copies
according to the algorithm given in the proof of Theorem~\ref{thm:sparsify} 
while ensuring that the remaining copies are a dominating set of $D$.  The main goal
of the pruning step is to remove some of the copies while approximately preserving
the ratio of the cost of the remaining copies to the ``depth'' of the disks
in $D$ with respect to the remaining copies.
We recursively apply the pruning step until the disks in $D$ are covered by
only a constant number of the remaining copies; the depth of our recursion is 
$\Theta(\log^* n)$.  We can then show that the expected weight of our final 
dominating set is at worst $2^{O(\log^* n)}\cdot \lambda^*$.

First, we define some terms that are used in the remaining part of the
section. Given a disk $v$ and a set of disks $S$, 
we say that $v$ is $L$-covered by $S$ if there are exactly 
$L$ disks in $S$ each of which intersects
$v$. In other words, neighborhood of $v$ in $S$ has size $L$.
We will make use of the following lemma, which is our main contribution to 
the weighted case:

\begin{lemma}
Let $S$ be a set of $m$ disks, and $1\leq L\leq m$ an integer. 
Let $Q$ be another (possibly infinite) set of disks.
There are $O(m \cdotp L^2)$ disks of $Q$ that intersect distinct
subsets of $S$ each of size at most $L$.
\label{lem:intersectionlemma}
\end{lemma}
\begin{proof}
We first define a few concepts that we use in the proof.
We focus on subsets $S'\subseteq S$ of size at most $L$ and disks of $Q$
whose neighborhood is precisely one of these subsets; let us denote
this subset of $Q$ by $Q'$.
For a set $S' \subseteq S$ of size at most $L$, and a pair of disks 
$u,v\in Q'$, 
we say that $u$ and $v$ are related if they both intersect every disk in $S'$ 
and
no other disk of $S \setminus S'$, i.e. $u \cap S = v \cap S = S'$. 
So we have an equivalence relation on $Q'$ where each 
equivalence class corresponds
to a set $S'\subseteq S$. 
We wish to bound the number of these equivalence classes.
Let these subsets of $S$ be
$\{S_1, S_2, \ldots, S_t\}$, and correspondingly, $t$ equivalence
classes $\{Q_1, Q_2, \ldots, Q_t\}$, where each disk in $Q_i$ intersects
every disk in $S_i$, and no other disk of $S \setminus S_i$. 
Consider any set $Q_i$ and an arbitrary disk $v\in Q_i$. By scaling and/or
translating $v$ we can obtain a disk $v'$ with the following property: $v'$ 
has the same neighborhood as all the disks in $Q_i$ and is sharing a single 
point 
with three, two, or one disk in $S_i$ and is intersecting all the other disks 
in more than one point; for the cases when $v'$ is touching a single 
disk in $S_i$, or two disks in $S_i$, we continue to translate and scale
$v'$ so that it touches two disks outside of $S_i$, or one disk outside
of $S_i$, respectively. Without loss of generality, we assume that $S$
has four special disks whose borders form the North, South, East, and
West boundary, respectively, of the region that contains the input
disks. We call these special disks $N,S,E,W$, respectively.  
Such a transformed disk, $v'$, that touches exactly three disks
is referred to as $v_i$.
We say that a disk $d$ is {\em canonical} with respect to a set of disks
$D'$ if there are three distinct disks in $D'$ such that $d$ intersects
the three disks at only one point each.  Note that each $v_i$ is a
canonical disk with respect to the set $S$.  We say that a canonical
disk $v$ is \textit{$\kappa$-canonical} with respect to a set of disks $D'$ if
at most $\kappa$ disks from $D'$ intersect the interior of $v$.  Therefore,
each of the canonical disks $v_i$ that we defined are $L$-canonical disks.
It is easy to see that $t$ is within a constant factor of the number 
of $L$-canonical disks with respect to $S$.  
For each $v_i$, the set of disks that shares exactly one point with
it is called the {\em defining set} of $v_i$ and every disk of $S_i$ that
shares more than one point with $v_i$ is said to be in the {\em conflict
set} of $v_i$. Note that the defining set of $v_i$ has at least one disk
from $S_i$, but at most two remaining disks can be from outside $S_i$.
We will upper bound the number of $L$-canonical disks with respect to $S$
(and hence upper bound $t$) by choosing a random sample $S' \subseteq S$ and 
calculating the expected number of $0$-canonical 
disks with respect to $S'$. This technique dates back 
to that of Clarkson \cite{Clarkson88}.


We pick a random sample $S' \subseteq S$, such that
$\pr{d \in S'} = \frac{1}{L}$, while $N,S,E,W \in S'$ with probability 1.  
Let $X_d$ be an indicator variable
denoting the event that $d$ is $0$-canonical for $S'$. So,
$$\pr{X_d = 1} \geq \left(\frac{1}{L}\right)^3 \cdotp
										\left(1-\frac{1}{L}\right)^j \geq
										\left(\frac{1}{L}\right)^3 \cdotp
										\left(1-\frac{1}{L}\right)^L \geq \frac{1}{eL^3}$$
where, $d$ is $j$-canonical with respect to $S$ and $j \leq L$.

We will show that the maximum number of $0$-canonical disks for $S'$ is
$O(|S'|)$.
\begin{claim}
For a set $S'$ of disks of size $k$, the maximum number of $0$-canonical 
disks induced is $O(k)$.
\label{linearbound}
\end{claim}
\begin{proof}
We will bound the number of $0$-canonical disks by the number of Voronoi
vertices of a weighted Voronoi diagram with $k$ sites in which the
sites are represented by the $k$ centers of disks in $S'$, and the
weight of each site is the radius of the corresponding disk.
Every Voronoi vertex is equidistant from the disks of the regions sharing
that vertex. So each Voronoi vertex in the Voronoi diagram corresponds 
to the center of a disk that touches
the boundary of exactly three disks of $S'$ (disks corresponding to the three
regions defining that vertex) and does not intersect any other disk of $S'$.
Since the
number of Voronoi vertices of a Voronoi
diagram having $k$ sites is bounded linearly in $k$, the number of of
canonical disks that touch three disks of $S'$ are thus bounded
linearly in $k$ as well.  This leads to the final bound of $O(k)$ 
on the maximum number of canonical disks that $S'$ admits. 
\end{proof}
According to the claim, the maximum number of $0$-canonical disks for
$S'$ is $O(|S'|)$. So,
$$\sum_d X_d \leq O(|S'|) \Rightarrow
	\expect{\sum_d X_d} \leq \expect{c_0 |S'|} = c_0 \frac{m}{L}, 
	c_0 \in O(1)$$
On the other hand,
$$\expect{\sum_d X_d} = \sum_d \expect{X_d} = \sum_d \pr{X_d = 1}
										\geq t \frac{1}{eL^3}$$
Thus, $t \leq c' m L^2$, where $c' \geq c_0 e \in O(1)$.

\end{proof}
We prove the following variant of a theorem of Varadarajan in 
\cite{Varadarajan10}.
\begin{theorem}
Given a disk graph $G=(V,E)$ and set of $n$ weighted disks $D \subseteq V$ in the plane s.t. 
$D$ dominates $V$, there is a randomized algorithm that produces a subset $D' \subseteq D$, 
such that for any disk $v \in V$, if $v$ is 
$L$-covered in $D$, then $v$ is at least $\log L$-covered in $D'$ and
$\pr{d \in D'} \leq \frac{c\cdotp\log L}{L}$.
\label{thm:sparsify}
\end{theorem}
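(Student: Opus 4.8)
The plan is to adapt Varadarajan's quasi-uniform sampling argument \cite{Varadarajan10} to the disk--disk intersection set system, using Lemma~\ref{lem:intersectionlemma} as the combinatorial input his argument needs: a near-linear bound on the number of distinct ``shallow'' neighborhoods, which here is $O(mL^2)$ for an $m$-disk family. The skeleton of the algorithm is a random thinning of $D$ at rate roughly $\log L / L$. Concretely, I would run $t = \Theta(\log(L/\log L))$ rounds, each round keeping every surviving disk independently with probability $\tfrac12$, so that the output $D'$ contains each $d \in D$ with probability about $2^{-t} = \Theta(\log L / L)$ -- this is the main term in the required bound $\pr{d \in D'} \le c\log L/L$. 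For a single disk $v$ that is $L$-covered in $D$, a Chernoff bound shows its coverage roughly halves each round, so a target coverage of $\Theta(\log L)$ survives all $t$ rounds except with probability $L^{-\Omega(1)}$.

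The obstacle is upgrading this per-point statement to one that holds for every $v \in V$, a possibly infinite set. Two disks with the same neighborhood in the current family behave identically, so the union bound runs only over \emph{distinct} neighborhoods, and Lemma~\ref{lem:intersectionlemma} bounds the number of distinct neighborhoods of size at most $\ell$ by $O(m\ell^2)$. However $m = |D|$ may be far larger than any polynomial in $L$, so a single $\Theta(\log L/L)$-sample does not survive this union bound when $L$ is small compared to $|D|$ -- which is exactly why the thinning must be iterated rather than done in one shot. Following Varadarajan, I would interleave the halving rounds with a repair step: the set $X$ of disks of $V$ whose coverage dropped below the target after a round is itself ``shallow'' -- its members still have $\Theta(L)$-size neighborhoods among the disks not yet discarded, so by Lemma~\ref{lem:intersectionlemma} it induces few distinct neighborhoods -- and one can add back a small subfamily of the not-yet-discarded disks that restores every disk of $X$ to the target coverage, recursing to handle points the repair itself misses. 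Since the fraction of points needing repair after a round is exponentially small in the current coverage, this recursion bottoms out quickly; and because the repair is forced to succeed (in the worst case it keeps everything adjacent to an under-covered disk), the coverage guarantee ``$v$ is at least $\log L$-covered in $D'$'' holds with certainty, leaving only $\pr{d \in D'}$ random.

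The main difficulty I expect is the bookkeeping that keeps $\pr{d \in D'}$ at $O(\log L/L)$: for a fixed $d$, the probability that $d$ is ever added by a repair step -- summed over all rounds and all recursion levels -- must be dominated by the $2^{-t}$ contribution of the plain thinning. This is where the near-linearity of Lemma~\ref{lem:intersectionlemma} (the $mL^2$ form, not a superlinear one) is essential: it forces each repair level to touch only an $L^{-\Omega(1)}$ fraction of the current family, so the geometric series of repair costs converges well below $\log L/L$. Making this go through requires choosing constants so that the exponent $\Omega(1)$ here strictly exceeds $1$, which in turn forces the target coverage and the repair-threshold coverage to be separated by a large enough constant factor; with that in place, the coverage bound holds for every $v \in V$ and the selection-probability bound holds for every $d \in D$, which is the theorem.
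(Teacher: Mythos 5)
Your high-level plan (quasi-uniform sampling at rate $\Theta(\log L/L)$, with Lemma~\ref{lem:intersectionlemma} supplying the $O(mL^2)$ bound on distinct shallow neighborhoods so that only representatives of equivalence classes matter) is the right frame, but the mechanism you propose --- repeated halving plus a repair step --- has a genuine gap exactly at the point you flag as ``bookkeeping.'' The theorem needs a \emph{per-disk} bound $\pr{d \in D'} \leq c\log L/L$ for every fixed $d$ (this is what makes the argument work for arbitrary weights in Theorem~\ref{thm:mainthm}, where the expected cost is $\sum_d \pr{d\in D_t}\, w_d$). Your repair analysis only controls the \emph{number} (or fraction) of disks touched by repair at each level, via the observation that few classes fail a Chernoff bound. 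That is an average statement: it bounds the expected size of the repair set, not the probability that a particular disk $d$ is conscripted. A fixed disk $d$ may be adjacent to as many as $\Theta(mL^2)$ shallow classes --- nothing in your scheme limits how many failing classes can implicate $d$ --- so the union bound over the classes in $d$'s neighborhood does not give $O(1/L)$, and a deterministic ``add back everything adjacent to an under-covered class'' repair can include an unlucky heavy disk with constant probability. Fraction-based repair arguments of this kind are exactly why earlier halving schemes only handled the unweighted/uniform case.

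The paper (following Varadarajan) closes this hole with an ingredient your proposal is missing: a global \emph{ordering} of $D$ built by repeatedly peeling off, via Lemma~\ref{lem:intersectionlemma}, a disk that currently intersects at most $O(L^2)$ shallow classes; the reversed sequence $\sigma=(d_1,\dots,d_m)$ is then processed in one pass, each disk being included with probability $c\log L/L$ unless it is \emph{forced} (some shallow class it covers would otherwise fail to reach $\log L$ coverage), in which case it is included deterministically. Because $d_j$ is responsible for only $O(L^2)$ classes at its position in $\sigma$, and the probability that a fixed class forces any disk is shown to be $O(1/L^3)$, the probability that $d_j$ is forced is $O(1/L)$, giving the per-disk bound; the coverage guarantee is deterministic since forced disks are always kept. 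To make your proposal correct you would need to either import this ordering/forced-inclusion device (at which point it becomes the paper's proof) or supply some other argument that bounds, for each fixed $d$, the probability that the repair step selects $d$ --- and as written no such argument is available.
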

\begin{proof}
We only describe a randomized process that selects a subset, $D'$ of disks
such that any disk $v \in V$ that is covered by $D$ in the  
range $[L,2L]$, $v$ is at least $\log L$-covered in $D'$. 
Let $N_m = D$, and let $\C_m$ denote the set of equivalence classes of 
disks in $V$ such that each class intersects at most 
$2L$ disks of $D$. Note that since the disks in one equivalence class of $V$ have
the same neighborhood in $D$, if we obtain a set $D'$ that at least $\log L$-covers one disk
in that class, then all the disks in that class are also at least $\log L$-covered.
Therefore, we can assume we have one representative disk from each class and our goal
is to at least $\log L$-cover these disks. We use this fact crucially in our analysis.
By Lemma~\ref{lem:intersectionlemma},
$|\C_m| \leq c' \cdotp n_m L^2$, $n_m = |N_m|$. So, there is a disk $d_m$ that
covers at most $2c' L^2$ classes of $\C_m$.  Find such a disk $d_m \in N_m$,
and recursively compute a sequence for $N_{m-1} = N_m\setminus \{d_m\}$, and
append the sequence to $d_m$. That is, in the arrangement of $N_{m-1}$ we 
consider
the classes $\C_{m-1}$ whose coverage in $N_{m-1}$ is at most $2L$.
The recursion stops when there are fewer
than $L$ disks remaining, at which point, we compute an arbitrary
sequence of the remaining set of disks.

Let $\sigma$ be the {\em reverse} of this sequence, that is,
$\sigma = (d_1,d_2,\ldots, d_m)$. When considering disk $d_j$, we make
an instant decision about including it in our cover or not. Call a disk
$d_j \in N_j$ {\em forced} if for some disk $v \in \C_j$, 
not including $d_j$ {\em will not} $\log L$-cover $v$, 
whose coverage in $N_m$ is in
$[L,2L]$. Otherwise, if $d_j$ is not forced, we add it to $D'$ with
probability $\frac{c \cdotp \log L}{L}$.  We will upper bound the probability 
of $d_j$ being forced -- we will show that it is at most $O(1/L)$. 

Observe that if a disk $d_j$ is forced because of $v$, then all the
disks $d_{j'}$ (with $j'\geq j$) that cover $v$ are also forced, and the
number of such disks is at most $\log L -1$ (otherwise $d_j$ won't be forced).
So it is sufficient to upper bound the probability of a disk $d_i$ being the 
first disk forced because of $v$. Let us denote this event by $\E_i(v)$.
Since from among the disks that cover $v$ at most the last $\log L$ 
disks
can be forced, the probability of one of these $\log L$ disks being forced is
at most $\log L$ times the probability that one of the disks before it is the
``first'' forced disk because of $v$.
We use $\E_i$ to denote the event that $d_i$ is the first disk forced because of
some disk that it covers.

Consider a disk $v$ whose coverage is in the range $[L,2L]$ and $d_i$ covers it.
To bound the probability of $\E_i(v)$
consider the subsequence of $\sigma$, called $\sigma'$, that covers
$v$ (so coverage of $v$ is still in $[L,2L]$). We will only focus
on this subsequence since it is the only one which is relevant to
$\E_i(v)$.
Observe that the length of $\sigma'$ is in the range $[L,2L]$.
The probability that $v$ forces any disk
will then be the sum of the probabilities of any of the last $\log L$
disks in $\sigma'$ being a ``first'' forced disk (by $v$).
Let the prefix of $\sigma'$ leading
up to $d_i$ be called $\sigma_i'$, that is, 
$\sigma_i' = (d_{a_1},d_{a_2},\ldots, d_{a_l = i})$.  
If $d_i$ is going to be the first forced disk in $\sigma'$, then
$i \geq L-\log L$. Suppose $v$ is the representative disk of one of the $O(L^2)$ classes
of $\C_i$ that are covered by $d_i$.
Since $\sigma_{i-1}'$ is a subset of disks that cover $v$,
$\sigma_{i-1}$ is also the set of disks that cover $v$. We will
bound the probability that an insufficient number of disks were
picked from $\sigma_{i-1}'$, leading to the forced inclusion of
$d_i$ as the first forced disk for $v$.
Let $x$ be the number of disks picked from
$\sigma_{i-1}'$ and fix a particular choice of $x$ disks from
$\sigma_{i-1}'$, $0 \leq x < \log L$. 
\begin{eqnarray*}	
	\pr{
		\substack{
			\text{$x$ particular} \\
			\text{disks from $\sigma'_{i-1}$}\\
			\text{are picked}
		} 
		\wedge
		\substack{
			i-x-1\\
			\text{disks are}\\
			\text{dropped}
		}
	}
	=
	\pr{
		\substack{
			\text{$x$ particular} \\
			\text{disks from $\sigma'_{i-1}$}\\
			\text{are picked}
		} 
	} 
	\cdotp
	\pr{
		\substack{
			i-x-1 \\
			\text{disks are}\\
			\text{dropped}
		}
	}
	=\\
	\left(\frac{c \cdotp \log L}{L}\right)^x \cdotp
			\left(1 - \frac{c \cdotp \log L}{L}\right)^{i-x-1}
\end{eqnarray*}

Since $i \geq L-\log L$ and there are $i-1 \choose x$ ways to choose $x$ disks
from $\sigma_{i-1}'$:
\[
\begin{array}{rcl}
	\pr{\E_i(v)}  &\leq& {i-1 \choose x} 
			\left(\frac{c \cdotp \log L}{L}\right)^x \cdotp
			\left(1 - \frac{c \cdotp \log L}{L}\right)^{L-2\log L-1}\\
		& \leq &
			{L-1 \choose x} \cdotp
			\left(\frac{c \cdotp \log L}{L}\right)^x \cdotp
			\left(1 - \frac{c \cdotp \log L}{L}\right)^{\frac{L}{2}} \\
		& \leq &
			\left(\frac{e \cdotp (L-1)}{x}\right)^x \cdotp
			\left(\frac{c \cdotp \log L}{L}\right)^x \cdotp
			\left(1 - \frac{c \cdotp \log L}{L}\right)^{\frac{L}{2}}\\
		& = &
			\left(\frac{c \cdotp e \cdotp (L-1) \cdotp \log L}{L \cdotp x}\right)^x \cdotp 
			\left(1 - \frac{c \cdotp \log L}{L}\right)^{\frac{L}{2}}
\end{array}
\]
\begin{claim}
\[
	\max_{0<x<\log L}
	\left\{\left(\frac{c \cdotp e \cdotp (L-1) \cdotp \log L}{L \cdotp x}\right)^x \right\} 
	< L^4
\]
\label{firstderivtest}
\end{claim}
\begin{proof}
We analyze the continuous function, 
$\left(\frac{c \cdotp e \cdotp (L-1) \cdotp \log L}{L \cdotp x}\right)^x$, 
and apply the
{\em first derivative test} from elementary calculus with respect to $x$, 
$x \in \mathbb{R}$.
Note that the only value for which the function maximizes 
$x = \frac{(L-1)\log L^c}{L} > \log L$. Restricting the domain, it
follows that it maximizes for the boundary point, $\log L$, yielding
a value of less than $L^4$, when $c \in O(1)$. 
\end{proof}
By applying the claim, in the case when $x > 0$, 
\[
\begin{array}{rcl}
	\pr{\E_i(v)}  
		& \leq & 
		L^4 \cdotp \frac{1}{L^{c/2}} \leq 
		\frac{1}{L^4}, \text{ for } c \geq 16
\end{array}
\]

Note that any disk $d_{i'}$ that occurs before $d_i$ in $\sigma$
if $d_{i'}$ is forced for a disk $v'$ that is not covered by $d_i$, which 
forces a disk $d_k$ which occurs after $d_i$ in $\sigma$ and that $d_k$
also covers $v$, then that event has no bearing on the event of
$d_i$ being a first forced disk for $v$
(see Figure~\ref{fig:fig1}).
\begin{figure}[h]
\centering
\includegraphics[width = 4in]{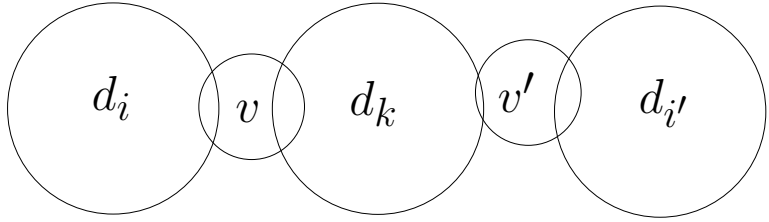}
\caption{$d_i$ occurs after $d_{i'}$ and before $d_k$ in $\sigma$.
$d_{i'}$ was forced for $v'$, so $d_k$ is also forced for $v'$. However,
$d_k$ also covers $v$. Suppose $d_i$ becomes forced for $v$. The
fact that $d_k$ is already in $D'$ does not increase the probability
of $d_i$'s forced inclusion for $v$.}
\label{fig:fig1}
\end{figure}

So, to upper bound the probability that some $d_j$ is a forced
disk for a fixed disk $v$, we sum over all valid indices $i<j$ with $d_i$
being the first forced disk because of $v$, and 
obviously there are at most $\log L$ of them,
\[
	\pr{
		\substack{
			\text{some $d_j$ is} \\
			\text{forced by}\\
			\text{disk } v
		}
	} \leq \displaystyle\sum_{i}
					\frac{1}{L^4} \leq \frac{1}{L^3}.	
\]
Since there are at most $2c'L^2$ classes of $\C_j$ 
having coverage in the range $[L,2L]$ that are covered by $d_j$,
$d_j$ can be a forced addition for any one of the at most $2c' L^2$ representative disks. 
So,
\[
	\pr{ \substack{
                     d_j \text{ is forced}\\
                     \text{for some } \\
                     \text{disk } v \in \C_j
                }
            } 
           \leq \frac{2c'}{L}.
\]
The probabilistic algorithm finds a dominating set $D' \subseteq D$  
where the probability of a given disk being in $D'$ is at most
$\frac{c \cdotp \log L}{L}$ and each disk $v \in V$ that is covered 
in the range $[L,2L]$ by $D$, is at
least $\log L$-covered in $D'$. We repeat the process
for points that are between $2L$ and $4L$ deep, and so on. Note that the 
probability of a disk being in $D'$ is still the same. 
\end{proof}

\subsection{Proof of Theorem~\ref{thm:mainthm}}
Let the input instance be a disk graph based on a set of disks $D$. 
For any disk $d \in D$, let
$N[d]$ denote the set of neighbors of $d$ in the graph, inclusive.
Consider the following natural LP relaxation for the weighted 
dominating set problem:
\[
\begin{array}{rl}
	\text{(LP)} & \min \displaystyle\sum_{d \in D} w_d x_d \\
\end{array}
\]
subject to,
\[
\begin{array}{rclr}
	\displaystyle\sum_{d':d' \in N[d]}x_{d'} & \geq & 1, & \forall d \in D \\
	x_d & \geq & 0, & \forall d \in D
\end{array}
\]

After solving the LP relaxation, we create a set $D_0$ of disks as
follows. For each disk $d$ such that $x_d \geq 1/2n$, we add 
$\lfloor \frac{x_d}{1/2n} \rfloor$ copies of $d$ to $D_0$. Each copy
of $d$ inherits its original cost. For each disk $d$ with 
$x_d < 1/2n$, we don't add any copy to $D_0$. It is easily verified
that $\wt{D_0} \leq 2n \cdot \lambda^*$, where $\lambda^*$ is
the objective function value of the optimal LP solution. Furthermore,
we have that each disk $d \in D$ is $n$-covered by $D_0$. 

In the next phase, our algorithm will recursively apply 
Theorem~\ref{thm:sparsify}
to obtain a successively sparse dominating set. For the $i$th application
of the theorem, we set $L_i = \log L_{i-1}$, for $i = 2, 3, \ldots, t$
to obtain a set $D_i \subseteq D_{i-1}$. For the first
application, we set $L_1 = n$.  

Fix a disk $d$ and let $t$ be the final application of 
Theorem~\ref{thm:sparsify}, to yield $D_t$. Let us calculate the 
expected weight of $D_t$. Let $\E_i^d$ denote the event that
$d \in D_i$. So, 
\begin{eqnarray}
	\expect{\wt{D_t}} & = &
		\displaystyle\sum_{d \in D_0} \pr{\E_t^d} \cdotp w_d
\label{low-cost}
\end{eqnarray}
For a disk $d$, note that,
\begin{eqnarray*}
\pr{\E_t^d} 
	& = &
	\pr{\E_t^d \Big\vert
		\bigwedge_{i=1}^{t-1} \E_i^d} \cdotp  
		\pr{\E_{t-1}^d \Big\vert \bigwedge_{i=1}^{t-2} \E_i^d} \ldots
		\pr{\E_3^d \Big\vert \E_2^d \wedge \E_1^d} \cdotp 
		\pr{\E_2^d \Big\vert \E_1^d} \cdotp \pr{\E_1^d} \\
	& \leq  &
	\frac{c \cdotp \log L_{t}}{L_{t}} \cdotp
	\frac{c \cdotp \log L_{t-1}}{L_{t-1}} \cdotp
	\frac{c \cdotp \log L_{t-2}}{L_{t-2}} \cdotp
	\frac{c \cdotp \log L_{t-3}}{L_{t-3}} \cdotp \ldots
	\frac{c \cdotp \log L_i}{L_i} \cdotp \ldots
	\frac{c \cdotp \log L_1}{L_1} \\
	& =  &
	\frac{c \cdotp \log L_{t}}{\log L_{t-1}} \cdotp
	\frac{c \cdotp \log L_{t-1}}{\log L_{t-2}} \cdotp
	\frac{c \cdotp \log L_{t-2}}{\log L_{t-3}} \cdotp
	\frac{c \cdotp \log L_{t-3}}{\log L_{t-4}} \cdotp \ldots
	\frac{c \cdotp \log L_i}{\log L_{i-1}} \cdotp \ldots
	\frac{c \cdotp \log L_2}{\log L_1} \cdotp
	\frac{c \cdotp \log L_1}{L_1} \\
	& = & \frac{c^{t} \cdotp \log L_{t}}{L_1} = \frac{c^{t} \cdotp 
					\log^{(t)} L_1}{L_1}
\end{eqnarray*}
Our recursion stops at a depth $t$ when $\log L_t \leq 1$. Eventually, we get
a dominating set $D_t$, that is, each disk $v \in V$
is at least $1$-covered in $D_t$. 
So, truncating the recursion at a depth of $t = \log^* n$, we get that,
\[
	\pr{\E_t^d} \leq \frac{c^t}{L_1} 
\]
Since $L_1 = n$ and $\sum_{d \in D_0} w_d \leq 2n\lambda^*$, inequality-(\ref{low-cost}) becomes,
\[
	\expect{\wt{D_t}} \leq \frac{c^t}{L_1} \displaystyle\sum_{d \in D_0} w_d \leq c^t 2 \lambda^*
\]
Using Markov's inequality,
\begin{eqnarray*}
	\pr{\wt{D_t} > 2 \cdotp \expect{\wt{D_t}}} & \leq &
			\frac{1}{2}	
\end{eqnarray*}
Repeating $O(\log n)$ independent trials and taking the lowest cost
amongst them yields,
\[
	\pr{\wt{D_t} \leq 2^{O(\log^* n)} \cdotp \lambda^*} > 1 - \frac{1}{n} 
\]
This proves Theorem~\ref{thm:mainthm}.

\paragraph{Running Time}
The algorithm requires finding a disk $d_j$ in $D$ such that
$d_j$ intersects $O(m \cdotp L^2)$ equivalence classes
from amongst the subset of disks in $V$ whose neighborhood in $D$
has size at most $L$. We construct these classes, $\{V_1, V_2, \ldots,
V_t\}$, by simply examining a disk $v$ and placing it in $V_i$ such that
any disk $u \in V_i$ has the exact same set of neighbors in $D$, and
the size of each neighborhood is bounded by $L$.
Since $t$ is bounded by $O(|D| \cdotp L^2)$, a naive approach of
comparing the neighborhood of each disk with the neighborhood of each
of the $V_i$'s, takes $O(|V| \cdotp |D| \cdotp L^4)$ time. So naively, the
equivalence classes $\{V_1, V_2, \ldots, V_t\}$ can be constructed in  
$O(|V|^2 \cdotp |D| \cdotp L^4)$ time. Constructing a set of 
representative disks $U = \{u_i\}_{i=1}^{t}$ from
the equivalence classes takes $O(|D| \cdotp L^2)$ time. Finding
a disk $d_j \in D$ such that $d_j$ intersects $O(L^2)$ disks of $U$
takes $O(|D|^2 \cdotp L^4)$ time. So, constructing the sequence,
$\sigma$, takes $O(|D|^3 \cdotp L^4)$ time. The quasi-uniform sampling
stage takes $O(|D|^2 \cdotp L^3)$ time for each disk
$d_j$ because we need to determine if $d_j$ is forced for any of the
disks in $U$, any one of which may contain some subsequence of disks
that occur after $d_j$ in $\sigma$. Since there are a maximum of $|D|$
disks that occur after $d_j$, the running time of the sampling stage,
given $\sigma$, is $O(|D|^3 \cdotp L^3)$. So, the overall running time
is $O(|V|^2 \cdotp |D| \cdotp L^4 + |D|^3 \cdotp L^3)$ for each 
recursive application of the
procedure in the proof of Theorem~\ref{thm:sparsify}.
Since each of $|V|, L, |D|$ is bounded polynomially in $n$, the
number of input disks, and given that the depth of the recursion is
bounded by $O(\log^* n)$ for each independent random experiment, and
$O(\log n)$ bounding the maximum number of random experiments, the over
all running time is bounded polynomially in $n$, using a naive
approach that does not seek to optimize the running time.

\section{Concluding Remarks and Open Questions}
Given the negative result of Marx \cite{Marx07a} which shows that even
for the simple case of unweighted unit disk graph, an EPTAS for the
problem would contradict the {\em exponential time hypothesis} 
\cite{ImpagliazzoP01}\footnote{Marx \cite{Marx07a} 
actually shows something stronger.}, it is unlikely that the
dependence of $1/\eps$ as an exponent of $n$ on the running time for the 
PTAS can be improved to, say, $f(1/\eps) \cdotp n^{O(1)}$. However, the 
running time of the local search PTAS is $n^{O(1/\eps^2)}$. Can this 
be improved to $n^{O(1/\eps)}$? In our work, we have made no attempt to
improve the running time.

For the weighted case, we are only able to show a constant integrality
gap for the lower bound despite numerous attempts. Thus, we believe that 
the right upper bound for the approximation factor is $O(1)$.

\paragraph*{Acknowledgments:} We thank Sariel Har-Peled and Kasturi Varadarajan for suggesting the use of
weighted Voronoi diagrams for the unweighted case, and we thank Kasturi Varadarajan for
pointing out the connection between weighted set cover and
weighted dominating set. We also thank Mohammad Salavatipour for his support and many valuable discussions.

\bibliography{diskgraph}

\end{document}